\newtheorem{theorem}{Theorem}
\newtheorem{lemma}{Lemma}
\newtheorem{definition}{Definition}
\pgfplotsset{compat=1.18}
\newcommand{\T}{\top}
\newcommand{\wt}{\mathrm{wt}}
\newcommand{\cA}{{\cal A}}
\newcommand{\cC}{{\cal C}}
\newcommand{\cE}{{\cal E}}
\newcommand{\cI}{{\cal I}}
\newcommand{\cK}{{\cal K}}
\newcommand{\cP}{{\cal P}}
\newcommand{\cR}{{\cal R}}
\newcommand{\cS}{{\cal S}}
\newcommand{\mbC}{{\mathbb C}}
\newif\ifarxiv
\begin{document}

\title{Stabilizer-Assisted Inactivation Decoding of Quantum Error-Correcting Codes with Erasures}

\author{%
\IEEEauthorblockN{%
    Giulio Pech%
        \IEEEauthorrefmark{1},
    Mert G\"{o}kduman%
        \IEEEauthorrefmark{1}%
        \IEEEauthorrefmark{2},
    Hanwen~Yao%
        \IEEEauthorrefmark{1}%
        \IEEEauthorrefmark{2},
    and Henry~D.~Pfister%
        \IEEEauthorrefmark{1}%
        \IEEEauthorrefmark{2}%
        \IEEEauthorrefmark{3}%
}
\IEEEauthorblockA{\IEEEauthorrefmark{1}%
                  Duke Quantum Center, Duke University, 
                  Durham, NC, USA%, \{hanwen.yao, henry.pfister\}@duke.edu
                  }
\IEEEauthorblockA{\IEEEauthorrefmark{2}%
                  Department of Electrical and Computer Engineering, 
                  Duke University, 
                  Durham, NC, USA}
\IEEEauthorblockA{\IEEEauthorrefmark{3}%
                  Department of Mathematics, Duke University, 
                  Durham, NC, USA}
}

\maketitle

%\TODO{Add appendix and move useful material that doesn't fit to the appendix.  Don't just comment out useful details. The appendix will be included in the arXiv submission.}

%\TODO{Shorten abstract.  This is 3 paragraphs and it should be 1.  Move important stuff in abstract to introduction.}
%\HY{DONE}

\begin{abstract}
In this work, we develop a reduced complexity maximum likelihood (ML) decoder for quantum low-density parity-check (QLDPC) codes over erasures. Our decoder combines classical inactivation decoding, which integrates peeling with symbolic guessing, with a new dual peeling procedure. In the dual peeling stage, we perform row operations on the stabilizer matrix to efficiently reveal stabilizer generators and their linear combinations whose support lies entirely on the erased set. Each such stabilizer identified allows us to freely fix a bit in its support without affecting the logical state of the decoded result. This removes one degree of freedom that would otherwise require a symbolic guess, reducing the number of inactivated variables and decreasing the size of the final linear system that must be solved. We further show that dual peeling combined with standard peeling alone, without inactivation, is sufficient to achieve ML for erasure decoding of surface codes. Simulations across several QLDPC code families confirm that our decoder matches ML logical failure performance while significantly reducing the complexity of inactivation decoding, including more than a 20\% reduction in symbolic guesses for the B1 lifted product code at high erasure rates.

\end{abstract}

\begin{IEEEkeywords}
Inactivation Decoder, Peeling Decoder, Quantum Error-Correcting Codes, Quantum Erasure Channel, QLDPC Codes.
\end{IEEEkeywords}

%\section*{TODO}
%
%\begin{itemize}
%  \item \textcolor{red}{I used chatGPT to creat initial draft based on Giulio's report and last year's ISIT erasure paper}
%  \item \textcolor{red}{ I have gone through it and cleaned background/intro material but the new content needs lots of work. Don't keep anything just because it's here.  Rewrite all parts as necessary (to make correct or improve).}
%  \item \textcolor{red}{ A number of things are redundant and repeated.  Look for these and think about how to reduce.}
%  \item \textcolor{red}{ For example, should we explain peeling decoder once cleanly for the classical case and just call as subroutine for quantum? }
%\end{itemize}

\vspace{-2mm}

\section{Introduction}

Quantum error correction is essential for quantum computing because physical qubits are very sensitive to noise% and disturbance
.
Surface codes~\cite{KITAEV20032,dennis2002topological} are often seen as the most practical near-term approach, but quantum low-density parity-check (QLDPC) codes
%now 
offer a promising approach to fault-tolerant quantum computation with 
%relatively 
higher rate and lower overhead.
A key challenge is that QLDPC codes require irregular and long-range connectivity between qubits.
Recently, a number of techniques have been proposed to overcome this challenge across several qubit platforms~\cite{henriet2020quantum,xu2024constant,monroe2014large,moses2023race,chen2024benchmarking,hong2024entangling,bravyi2024high}.

A wide variety of QLDPC constructions are now available.
The hypergraph product (HGP) construction~\cite{tillich2009quantum} is an early and foundational approach that combines two classical LDPC codes to obtain QLDPC codes with controlled rate and distance.
Other families include lifted product and balanced product codes 
\cite{Panteleev_2022_lifted,panteleev2022asymptotically,breuckmann2021balanced_product}, bivariate bicycle codes 
\cite{bravyi2024high}, and related generalizations 
%\cite{mackay2004sparse,Panteleev2021degeneratequantum,LinPryadko2024quantum_2BGA,WangLinPryadko2023quantum_2block,voss2025multivariatebicyclecodes,malcolm2025computingefficientlyqldpccodes}.
\cite{Panteleev2021degeneratequantum,LinPryadko2024quantum_2BGA,WangLinPryadko2023quantum_2block}.

In this work, we focus on decoding QLDPC codes over the quantum erasure channel. In this noise model, each qubit either experiences no error or is replaced by an erasure flag and subjected to a uniform random Pauli error. This setting is useful in research for several reasons.
First, it simplifies analysis and provides intuition for more general noise models.
Second, both surface codes and QLDPC codes are known to achieve substantially higher thresholds under erasure noise than under Pauli noise, and recent erasure conversion proposals across several architectures can translate the dominant physical errors into erasures~\cite{wu2022erasure,sahay2023high,ma2023high,kang2023quantum,kubica2023erasure,teoh2023dual}, making this model increasingly relevant in practice.

On the classical side, maximum likelihood (ML) decoding of LDPC codes over erasures can be performed efficiently via peeling plus inactivation~\cite{shokrollahi2005inactivation,shokrollahi2005raptor,Lazaro2017Inactivation,Pishro2004peeling_guessing,measson2005maxwellconstructionhiddenbridge}. 
Peeling resolves erased variables using degree-1 checks on the Tanner graph, and when combined with inactivation, symbolic guesses are introduced whenever peeling stalls so that the process can continue. After peeling removes all resolvable variables, the symbolic guesses are determined by solving a small system of linear equations via Gaussian elimination (GE), yielding ML decoding.
%On the classical side, erasure decoding can be performed efficiently using peeling and inactivation decoding, as described in \cite{shokrollahi2005inactivation, shokrollahi2005raptor, Lazaro2017Inactivation, measson2005maxwellconstructionhiddenbridge}. Peeling iteratively uses degree-1 checks to solve for erasures, while inactivation decoding augments peeling with symbolic guesses and a small Gaussian elimination (GE) core.

On the quantum side, erasure decoding has been investigated for both topological codes and QLDPC codes. For surface codes, Delfosse and Z\'{e}mor~\cite{PhysRevResearch.2.033042} proved that linear time ML decoding is possible on the erasure channel. The union-find decoder, which can correct both errors and erasures, was initially proposed for topological codes~\cite{delfosse2021almost} and was later extended to QLDPC 
codes~\cite{delfosse2022toward}. 
%codes~\cite{delfosse2022toward,löbl2024breadthfirstgraphtraversalunionfind}. 
For HGP codes, Connolly et al.~\cite{connolly2024fast} introduced a fast decoder that combines peeling with a vertical and horizontal decomposition of the residual graph. Iterative belief propagation (BP) with guided decimation~\cite{yao2023belief,gokduman2024erasure,R2025improved_BPGD_erasure} also performs well on HGP codes, exhibiting waterfall and error floor regimes~\cite{Gokduman2025hgp_waterfall}, and it can be applied to decode general QLDPC codes. A cluster decoder based on peeling and cluster decomposition has also been proposed. It can approach ML performance efficiently in the low erasure regime but may have high complexity near threshold~\cite{yao2024clusterdecompositionimprovederasure}.

In this paper, we apply peeling together with inactivation to achieve ML decoding of QLDPC codes over erasures. A key challenge in this setting is that many stabilizers, either individual generators or their linear combinations, may be supported entirely on the erased set. Although such fully erased stabilizers do not affect the logical state, they create stopping sets for peeling and can trigger unnecessary symbolic guesses during inactivation decoding, increasing the computational burden of both the peeling and inactivation steps and of the final GE stage.

To address this issue, we introduce a dual peeling procedure that efficiently identifies fully erased stabilizers through row operations on the stabilizer matrix. Our stabilizer-assisted inactivation decoder proceeds by first applying dual peeling to locate as many fully erased stabilizers as possible, then fixing one erased bit in the support of each of them to remove degrees of freedom that would otherwise appear as symbolic guesses. 

Peeling plus inactivation decoding is then applied to the reduced erasure pattern, and the symbolic guesses are determined by solving a system of linear equations through GE at the end. This integrated approach greatly reduces the number of symbolic guesses and the size of the final linear system, providing an efficient method for achieving ML erasure decoding for QLDPC codes. For surface codes in particular, we show that applying dual peeling followed by standard peeling, without invoking inactivation, is sufficient to reach ML decoding over erasures.

Simulations across several QLDPC code families confirm that our decoder attains the same logical failure performance as ML decoding while reducing the complexity.

\emph{Note:}
    This work will be presented as a poster at QIP 2026~\cite{pech2026stabilizer_qip}.
    Right before submission of the conference version~\cite{pech2026stabilizer_isit}, a similar independent work was posted to arXiv~\cite{freire2026quantum}. 

\section{Preliminaries}
\label{sec:prelim}

In this section, we briefly review standard material.
\ifarxiv
Appendix~\ref{app:notation}.
\else
Our notation is quite standard and, due to space constraints, 
we refer readers to~\cite{pech2026stabilizer_arxiv}.
\fi
We also review standard classical and quantum coding concepts which can also be found in standard references such as~\cite{gottesman1997stabilizer,calderbank1997quantum,Nielsen_Chuang_2010,Richardson_Urbanke_2008,tillich2009quantum}.

\subsection{Classical Erasure Correction}

A binary linear code $\cC \subseteq \mathbb{F}_2^n$ is a subspace of the vector space $\mathbb{F}_2^n$.
A parity-check matrix $H \in \mathbb{F}_2^{m \times n}$ for $\cC$ satisfies $H x^\T = 0$ for all $x \in \cC$. The Tanner graph of $H$ is a bipartite graph with a variable node for each column of $H$, a check node for each row of $H$, an edge between variable $j$ and check $i$ whenever $H_{ij} = 1$.

Over the binary erasure channel (BEC), each transmitted bit is either received correctly or erased with probability $p$. The decoder knows which positions were erased. Erasure decoding can be formulated as solving the linear system
\begin{equation}\label{eq:syndrome-system}
  H_\cE x_\cE^\T = s,
\end{equation}
where $\cE$ is the set of erased positions, $H_{\cE}$ is the submatrix of $H$ formed by the erased columns, $x_{\cE}\in\mathbb{F}_2^{|\cE|}$ are the unknown erased bits, and $s\in\mathbb{F}_2^{m}$ is the measured syndrome.
The low-complexity peeling decoder iteratively uses degree-1 checks in the Tanner graph to solve for erased bits and remove them from the graph, until no degree-1 checks remain. If peeling terminates with unresolved erasures, the residual set of erasures is called a \emph{stopping set}.

\subsection{Stabilizer Formalism}

A single qubit in a pure quantum state is modeled as a unit vector in the Hilbert space $\mbC^2$, and an $n$-qubit system lives in $(\mbC^2)^{\otimes n}$. The Pauli matrices are given by
\[
I =
\begin{bmatrix}
1 & 0 \\ 0 & 1
\end{bmatrix},\,\,
X =
\begin{bmatrix}
0 & 1 \\ 1 & 0
\end{bmatrix},\,\,
Y =
\begin{bmatrix}
0 & -i \\ i & 0
\end{bmatrix},\,\,
Z =
\begin{bmatrix}
1 & 0 \\ 0 & -1
\end{bmatrix}.
\]
The $n$-qubit Pauli group $\cP_n$ is generated by $\kappa D(a,b)$ with
\[
D(a,b) := X^{a_1} Z^{b_1} \otimes X^{a_2} Z^{b_2} \otimes \cdots \otimes X^{a_n} Z^{b_n},
\]
with $a,b \in \mathbb{F}_2^n$, and $\kappa \in \{\pm 1,\pm i\}$.
The weight of a Pauli operator is the number of qubits on which it acts nontrivially%(with $X$, $Y$, or $Z$, rather than $I$)
.

A commutative subgroup $\cS \subset \cP_n$ that does not contain $-I$ is called a \emph{stabilizer subgroup}.
An $[[n,k]]$ \emph{stabilizer code} with $n$ physical qubits and $k$ logical qubits is a $2^k$-dimensional subspace defined by
\[
\cC = \{ \, |\psi\rangle \in (\mbC^2)^{\otimes n} : M|\psi\rangle = |\psi\rangle,\; \forall M \in \cS \}.
\]
The operators in $\cS$ are called \emph{stabilizers}.

A \emph{logical Pauli} is an operator $L \in \cP_n \setminus \cS$ satisfying $L \cC = \cC$.
The \emph{distance} $d$ of a quantum code is the minimum weight among all logical Pauli operators.
A stabilizer code with distance $d$ can correct up to $d-1$ erasures, detect up to $d-1$ errors, and can correct up to $t=\lfloor (d-1)/2 \rfloor$ errors.

\subsection{CSS Codes and Quantum LDPC Codes}

CSS codes~\cite{Calderbank_1996} form an important class of stabilizer codes. They are defined by a pair of classical binary codes $\cC_1, \cC_2 \subseteq \mathbb{F}_2^n$ such that $\cC_2^\perp \subseteq \cC_1$. In matrix form, one specifies two binary matrices $H_X,H_Z$ satisfying
\begin{equation}
  H_X H_Z^{\T} = H_Z H_X^{\T} = 0 \pmod{2}.
\end{equation}
If $H_X$ and $H_Z$ are low-density matrices, then the resulting stabilizer code is a QLDPC code.

The hypergraph product (HGP) construction~\cite{tillich2009quantum} defines a CSS code using two classical parity-check matrices. If $H_1$ and $H_2$ are $m_1 \times n_1$ and $m_2 \times n_2$ binary matrices, then the HGP code has
\begin{align}
H_X &=
\begin{bmatrix}
H_1 \otimes I_{n_2} & I_{m_1} \otimes H_2^{\T}
\end{bmatrix}, \\
H_Z &=
\begin{bmatrix}
I_{n_1} \otimes H_2 & H_1^{\T} \otimes I_{m_2}
\end{bmatrix}.
\end{align}
%If the underlying classical codes are asymptotically good, then the resulting quantum LDPC code has distance scaling like the square root of its block length.

\subsection{Syndrome Decoding on the Quantum Erasure Channel}

On the quantum erasure channel \cite{bennett1997capacities, grassl1997codes}, erasures are applied independently to the qubits with probability $p$.
Erased qubits are corrupted by a uniform random Pauli error and the set $\cE$ of erased qubits is known at the decoder.
We represent $X$-type and $Z$-type error patterns by binary vectors $x,z \in \mathbb{F}_2^n$.
The elements of these vectors are always zero for indices in the set $\cE^c$ of non-erasures and are Bernoulli$\big(\frac{1}{2}\big)$ for indices in the set $\cE$.
The measured syndromes $s_x , s_z \in \mathbb{F}_2^m$ are defined by $s_x = H_Z x$ and $s_z = H_X z$.
To decode, one sets up two systems of linear equations
\begin{equation}
  H_{Z,\cE} x_\cE = s_x, \quad H_{X,\cE} z_\cE = s_z.
\end{equation}

For a given erasure pattern and syndrome, there may be many Pauli errors consistent with the observations:
\begin{itemize}
  \item Errors that differ by an element of the stabilizer group $\cS$ are equivalent, have the same correction operator, and do not change the logical state;
  \item Errors that differ by a logical operator have different correction operators.
\end{itemize}
An ML decoder chooses the most likely error pattern consistent with the erasures and the syndrome. In the case of ties, it may either declare failure or break the tie arbitrarily. In practice, decoding may terminate in one of four ways:
\begin{itemize}
  \item \textbf{Success:} the estimated error equals the actual error,
  \item \textbf{Degenerate success:} the difference between the estimated and actual error is a stabilizer,
  \item \textbf{Logical failure:} the estimated error differs from the actual error by a logical operator,
  \item \textbf{Non-convergent:} the decoder does not recover a solution consistent with the measured syndrome.
\end{itemize}

\subsection{Inactivation Decoding}
\label{sec:peeling-inactivation}

\begin{figure}[t]
\centering
\scalebox{0.85}{%
\begin{tikzpicture}[scale=1.0,>=stealth]

  % --- dimensions ---
  \def\W{6}        % width of H_E part
  \def\Ws{0.15}    % syndrome column width (match highlighted column)
  \def\H{4}

  % main rectangle (now augmented with syndrome column)
  \draw (0,0) rectangle (\W+\Ws,\H);

  % separator for syndrome column + label above it
  \draw (\W,0) -- (\W,\H);
  \node[above] at (\W+0.5*\Ws,\H) {$s$};

  % vertical split between (A,B) and (D,C)
  \draw (4.5,0) -- (4.5,\H);

  % horizontal split between (A,D) and (B,C)
  \draw (0,1.5) -- (\W+\Ws,1.5);

  % diagonal in block A (triangular structure from peeling/elimination order)
  \draw (0,\H) -- (4.5,1.5);

  % shaded right band (guessed/inactivated columns: blocks D and C) -- only over H_E
  \fill[gray!35] (4.5,0) rectangle (\W,\H);

  % block labels
  \node at (2.0,3.0) {$A$};
  \node at (2.0,0.6) {$B$};
  \node[white] at (5.25,2.7) {$D$};
  \node[white] at (5.25,0.7) {$C$};

  % highlight a column that will be inactivated (chosen from active erased columns)
  \fill[orange!35] (4.1,0) rectangle (4.25,\H);
  \draw[orange!70!black] (4.1,0) rectangle (4.25,\H);

  % draw arrow showing column movement into guessed band
  \draw[->,thick,orange!70!black]
    (4.175,3.75) .. controls (4.6,3.9) and (4.8,3.9) .. (5.15,3.6);
    \node[orange!70!black,align=left] at (5.45,3.65)
      {\scriptsize inactivate\\[-1pt]\scriptsize (guess)};

  % left vertical arrows: l_r and m-l_r
  \draw[<-|] (-0.7,\H) -- (-0.7,1.5);
  \draw[|->] (-0.7,1.5) -- (-0.7,0);
  \node[left] at (-0.7,2.75) {$\ell_r$};
  \node[left] at (-0.7,0.75) {$m-\ell_r$};

  % right vertical arrow: m (move to the augmented edge)
  \draw[<-|] (\W+\Ws+0.7,\H) -- (\W+\Ws+0.7,0);
  \node[right] at (\W+\Ws+0.7,2) {$m$};

  % top horizontal arrows: l_r and l_x (only over H_E part)
  \draw[<-|] (0,4.4) -- (4.5,4.4);
  \draw[|->] (4.5,4.4) -- (\W,4.4);
  \node[above] at (2.25,4.4) {$\ell_r$};
  \node[above] at (5.25,4.4) {$\ell_x$};

  % bottom horizontal arrow: k (only over H_\cE part)
  \draw[<-|] (0,-0.4) -- (\W,-0.4);
  \node[below] at (3,-0.4) {$k$};
\end{tikzpicture}}
\vspace{-2mm}
\caption{Block structure of the augmented erased submatrix $H_{\cE}$ during inactivation decoding.
The $k$ erased columns are split into $\ell_r$ active columns processed by peeling (left) and $\ell_x$ inactivated/guessed columns (gray band, right). The last column $s$ is the syndrome, used to create the augmented matrix to solve \eqref{eq:syndrome-system}.
Blocks $C$ and $D$ contain the coefficients of the $\ell_x$ guessed variables in these two sets of constraints, $C$ to solve for the inactive ones, and $D$ to determine the active bits given the inactive ones.}
\label{fig:divided-matrix}
\vspace{-3mm}
\end{figure}
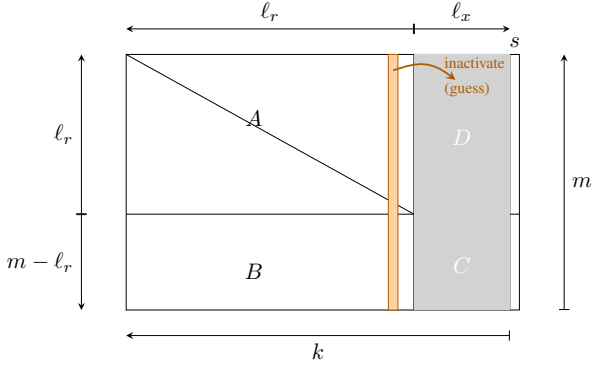

%\paragraph*{Inactivation (symbolic guessing) }
%\HY{(cite the Maxwell decoder paper and other classical inactivation-based erasure decoding papers.)}
Inactivation decoding augments peeling by allowing the decoder to \emph{temporarily treat} certain erased variables as formal unknowns so that peeling can continue \cite{shokrollahi2005inactivation, shokrollahi2005raptor, Lazaro2017Inactivation, measson2005maxwellconstructionhiddenbridge} .
Formally, one can maintain a partition of the erased indices
\[
\cE = \cA \,\cup \, \cI,
\]
where $\cA$ are \emph{active} erased variables and $\cI$ are \emph{inactivated} (guessed) variables.
Let $g\in\mathbb{F}_2^{|\cI|}$ denote the vector of inactivated variables, with an arbitrary but fixed ordering of $\cI$.
We interpret ``making a symbolic guess'' for $x_j$ as removing $j$ from $\cA$, adding $j$ to $\cI$, and introducing a new formal variable $g_{|\cI|}$ with the identification $g_{|\cI|} \equiv x_j$.
Thus, subsequent peeling operations treat $x_j$ as  known %-for-the-purpose-of-peeling
(it is removed from $\cA$) and finding its true value is deferred to a later stage.

Operationally, the decoder alternates the following steps until peeling has solved all the active variables:
\begin{enumerate}
\item \textbf{Peel:} While there exists a check row whose restriction to active variables has weight one, solve that active variable and remove it from $\cA$ (updating neighboring checks by substitution). Each solved variable is recorded as an affine function of $(s,g)$.
\item \textbf{Inactivate:} If peeling stalls (no active row of weight one), choose an index
\begin{equation}
  j^* \in \arg\max_{j\in\cA} c_j,
\end{equation}
where $c_j$ is the residual weight of column $j$. Then, create a new symbolic guess $g_{\ell}$, and move $j^*$ from $\cA$ to $\cI$.
\end{enumerate}

At termination, every erased variable is expressed as a function of $s$ and $g$.
Substituting these expressions into the remaining unsatisfied check equations yields a linear system involving only the guessed variables,
\begin{equation}
  C g = b,
\end{equation}
which is solved by Gaussian elimination over $\mathbb{F}_2$.
Back-substitution then recovers all erased bits.
Inactivation confines dense elimination to the $|\cI|\times|\cI|$ core, and on the BEC it achieves maximum-likelihood decoding while keeping $|\cI| \leq |\cE|$. Note that peeling can be viewed as the special case of inactivation decoding with no inactivations, in which case Gaussian elimination is never needed.

%\HY{We need to add the following discussion on applying peeling and inactivation decoding to stabilizer codes over erasures:
%\begin{itemize}
%    \item When decoding Pauli\text{-}X errors over erasures, every fully erased Pauli\text{-}X stabilizer support forms a stopping set (cite papers that discuss this).
%    \item Note that if we apply inactivation decoding to an erasure pattern that contains fully erased stabilizers (or logical operators, which occur with much lower probability), the matrix $C$ in the resulting linear system will not be full rank.
%\end{itemize}
%}

\section{Quantum Decoding}
\label{sec:method}

We now describe the quantum decoder studied in this work.
Because CSS decoding on the quantum erasure channel reduces to two independent binary erasure decoding problems, we describe the procedure for decoding $X$ errors; decoding $Z$ errors is identical with
$H_Z$ replaced by $H_X$.

Given an erasure set $\cE$ and syndrome $s_x$, the goal is to find $x_\cE\in\mathbb{F}_2^{|\cE|}$ satisfying
$H_{Z,\cE}x_\cE=s_x$, up to stabilizer equivalence.
Our decoder has two stages:
\begin{enumerate}
  \item \textbf{Dual-peeling preprocessing:} run the dual peeling procedure to uncover fully erased stabilizers and fix one erased degree of freedom per independent fully erased stabilizer~\cite{connolly2024fast, Stace-physreva10}; 
  \item \textbf{Classical inactivation decoding:} run the inactivation decoder on the reduced erased system.
\end{enumerate}
Intuitively, dual peeling reduces the number of degrees of freedom that must be represented by symbolic guesses, so the
Gaussian-elimination (GE) core in inactivation is smaller.

\subsection{Stage 1: Dual peeling and primal peeling}

If a stabilizer acts only on erased qubits, it can flip all variables in its support without changing the measured syndrome or the logical state.
Thus, in its support, we can fix one variable \emph{arbitrarily} and reduce the stabilizer degeneracy by one.
This idea was introduced and called pruning in~\cite{connolly2024fast} and is closely related to gauge fixing ideas~\cite{Stace-physreva10}.

%Consider decoding $X$ errors for a CSS code, i.e., solving $H_{Z,\cE} x_\cE = s_x$ over $\mathbb{F}_2$.
%In addition to these syndrome constraints, the stabilizer structure implies that adding (XORing) rows of $H_Z$ yields an equivalent constraint.
%This freedom can be exploited when some qubits are \emph{known} (not erased): if two checks overlap on a known qubit, we can combine them to eliminate that known-qubit dependence and potentially obtain a check supported only on erased qubits.

%We use a \emph{dual peeling} procedure that performs row combinations guided by the non-erased positions.
%Let $\cK = \cE^c$ be the set of known qubits.
%For a check (row) $i$, define its \emph{known-degree} as

%\[
%d_{\cK}(i) := \big|\{ j \in \cK : H_{Z,ij}=1 \}\big|.
%\]

Let $\cE$ be the erased qubits and $\cK=\cE^c$ the known qubits.
When decoding $X$ errors on the erasure channel, the unknown error vector is supported on $\cE$ and must satisfy
\[
H_{Z,\cE}x_\cE=s_x.
\]
Two solutions are physically equivalent if they differ by an $X$-type stabilizer: for any $u$,
\[
x_\cE \sim x_\cE \oplus (u^\T H_X)_\cE,
\]
and the $Z$-syndrome is unchanged because $H_Z H_X^{\T}=0$.
Hence, if there exists a nontrivial combination $u^\T H_X$ supported entirely on $\cE$, then we may \emph{fix} one erased variable arbitrarily and absorb any discrepancy by multiplying the recovery by this fully erased $X$-stabilizer.
Our goal in the dual peeling stage is therefore to identify such fully erased $X$-stabilizers by performing row operations on $H_X$ guided by the known set $\cK$.

We therefore run \emph{dual peeling} on the $X$-stabilizer matrix $H_X$, using the known set $\cK$
only to guide row operations. Dual peeling iteratively applies two simplification rules on the \emph{known} submatrix
to uncover \emph{fully erased stabilizers} and use them to remove unknowns for free. To ease the description, we reorder the $H_X$ matrix as
\[[\,H_{X,i,\cK}\;\; H_{X,i,\cE}\,],\]
and define its \emph{known-degree} as $d_\cK(i)=\mathrm{wt}(H_{X,i,\cK})$.

\begin{enumerate}
\item \textbf{Known-column weight-2 elimination.}
\label{rule1}
If there exists a known qubit $j\in\cK$ whose column has weight two in $H_{X,\cK}$,
incident to rows $i_1,i_2$, then remove $i_2$ and replace $i_1$ by their XOR:
\[
H_{X,i_1}\leftarrow H_{X,i_1}\oplus H_{X,i_2}.
\]

Since column $j$ has ones only in rows $i_1,i_2$, for any coefficient vector $u$ we have
$(u^\T H_X)_j=u_{i_1}\oplus u_{i_2}$.
If $u^\T H_X$ is fully erased then $j\in\cK$ implies $(u^\T H_X)_j=0$, hence $u_{i_1}=u_{i_2}$.
Therefore, every fully erased combination uses $H_{X,i_1}$ and $H_{X,i_2}$ only through their sum
$H_{X,i_1}\oplus H_{X,i_2}$.

\item \textbf{Row peeling on known-degree-1 rows.}
If there exists a row $i$ with $d_\cK(i)=1$, let $j$ be its unique neighbor in $\cK$.
Use row $i$ as a pivot to eliminate column $j$ from every other row that contains it:
for each $k\neq i$ with $H_{X,kj}=1$, set
\[
H_{X,k}\leftarrow H_{X,k}\oplus H_{X,i}.
\]
This “peels” known qubits out of the stabilizer description and can reduce the known-degree of other rows.
\end{enumerate}

The procedure repeats until neither rule applies. Any row $i$ with $d_\cK(i)=0$ at termination
is a \emph{fully erased $X$-stabilizer}: its support lies entirely in $\cE$.
Such a stabilizer can be used to fix one erased variable arbitrarily for free. For example, say that we pick one of the qubits inside the stabilizer to be $x_j =0$.
If the true error has $x_j=1$, we can multiply the recovery by this fully erased stabilizer,
which flips $x_j$ (and only other erased positions) without changing the syndrome or the logical state.
Thus each independent fully erased stabilizer eliminates one degree of freedom that would otherwise appear as a symbolic guess in the inactivation core, reducing the size of the matrix on which we apply Gaussian elimination.

In practice, we arbitrarily choose a qubit covered by the stabilizer and fix its value to 0.
The qubit is then added to the set of non-erased qubits (i.e., moved to the first part of the stabilizer matrix to be included in the peeling).
This ensures that each fixed bit removes exactly one degree of freedom.

\paragraph*{Remark (alternating primal and dual peeling)}

One might consider running rounds of dual peeling after each inactivation (or after each burst of primal peeling).
Empirically, for all code instances and erasure rates we tested, this did not further reduce the dimension of the residual GE core beyond what was achieved by a single initial round of dual-peeling pass.
We also have some theoretical understanding of why this happens.
Therefore, we only consider a single round of dual-peeling followed by inactivation decoding.

\begin{lemma}[No gain from repeating dual peeling after primal]
\label{lem:no_second_dual_pass}
Fix $\cE$ and $\cK=\cE^c$.  After running dual peeling on $(H_X,\cK)$ to exhaustion, re-running rule-2 dual peeling after any
amount of primal peeling cannot produce a new fully erased $X$-stabilizer.
\end{lemma}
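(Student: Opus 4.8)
The plan is to track the linear-algebraic meaning of the two dual-peeling rules and show that, once exhausted, no subsequent primal peeling can revive rule~2. First I would fix notation: after the first dual-peeling pass, let $R$ denote the current row span of $H_X$ (row operations do not change this span, and neither rule~1 nor rule~2 ever deletes a vector from the span in a way that matters, since rule~1 only replaces a generating pair $\{H_{X,i_1},H_{X,i_2}\}$ by $\{H_{X,i_1}\oplus H_{X,i_2}\}$ together with the retained row, so the span is preserved up to the one genuine degeneracy we are deliberately extracting). The key invariant I want is: \emph{at exhaustion of dual peeling, every row $i$ has $d_\cK(i)\neq 1$, and moreover the set of rows with $d_\cK(i)=0$ is exactly a basis for the subspace of $R$ consisting of vectors supported on $\cE$.} Call this subspace $R_\cE = R \cap \mathbb{F}_2^{\cE}$ (padding $\cE$-supported vectors with zeros on $\cK$). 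The number of fully erased independent $X$-stabilizers that dual peeling can ever find is $\dim R_\cE$, and the claim is that one pass already achieves this.

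Next I would observe what primal peeling does to the matrix. Primal (rule~2-style) peeling after inactivation operates on the \emph{erased} columns: it uses a row whose restriction to the currently-active erased variables has weight one as a pivot and XORs it into other rows. Crucially this is again a sequence of elementary row operations on $H_X$, so the row span $R$ is unchanged, and therefore $R_\cE$ is unchanged. Now suppose, for contradiction, that after some amount of primal peeling a re-run of rule~2 (known-degree-1 row peeling, where ``known'' has possibly been enlarged by the bits we fixed and by inactivated variables being treated as known) produces a \emph{new} fully erased $X$-stabilizer — i.e., a row that at that later moment has $d_\cK(\cdot)=0$ but corresponds to a vector of $R_\cE$ not in the span of the fully erased stabilizers found in the first pass. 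This would force $\dim R_\cE$ to have strictly increased, contradicting invariance of $R_\cE$ under row operations. The only subtlety is that the ``known'' set has changed: we have moved fixed bits into $\cK$ and primal peeling treats solved/inactivated variables as eliminated. I would handle this by noting that enlarging $\cK$ only \emph{shrinks} $R_\cE$ (fewer coordinates are allowed to be nonzero), and that a bit fixed because of a fully erased stabilizer found in pass one is, by construction, already in the span of pass-one stabilizers — so any row that becomes known-degree-0 after the enlargement was already $\cE$-supported relative to the original $\cK$, hence already captured.

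The main obstacle I expect is making precise the bookkeeping between the three notions of ``known'': the original $\cK$, the $\cK$ enlarged by the deliberately fixed bits, and the effective ``eliminated'' set during primal peeling (which includes peeled and inactivated active variables). I would argue that primal peeling never changes the row span and only ever \emph{removes} coordinates from consideration (it eliminates columns), so the pool of vectors that could certify a new fully erased stabilizer — namely $R_\cE$ for the \emph{original} $\cK$ — is monotonically non-increasing along the whole process. Combined with the invariant that pass one already exhausts a basis of $R_\cE$, this yields the lemma. A secondary point worth a careful sentence: I should verify that rule~1 in pass one does not throw away any element of $R_\cE$ — it merges $i_1,i_2$ into $i_1\oplus i_2$ precisely because a known coordinate forces $u_{i_1}=u_{i_2}$ in every $\cE$-supported combination, so no $\cE$-supported vector is lost, only a $\cK$-touching direction is. Once that is in place the contradiction argument closes cleanly, and I would also remark that the same span-invariance explains the empirical observation quoted before the lemma (no gain from \emph{any} interleaving schedule, not just the one stated), though proving only the stated direction suffices.
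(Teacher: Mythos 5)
There is a genuine gap. Your argument hinges on the invariant that, at exhaustion of the first dual-peeling pass, the rows with $d_\cK(i)=0$ form a \emph{basis} for $R_\cE=R\cap\mathbb{F}_2^{\cE}$, i.e.\ that one pass already discovers \emph{every} fully erased stabilizer in the row span. That completeness claim is false for general codes: dual peeling is a greedy, local procedure, and it can stall with $d_\cK(i)\ge 2$ for every remaining row and every known column of weight $\ge 3$, while some combination of two or more rows is nevertheless supported entirely on $\cE$. (A concrete obstruction: two rows $r_1,r_2$ sharing the same two known columns $j_1,j_2$, each of which also appears in several other rows; neither rule fires, yet $r_1\oplus r_2$ is fully erased.) The paper proves a completeness statement of this kind only for surface codes, via a separate lattice argument (Lemma~\ref{lem:rule1_finds_all_fully_erased}), precisely because it does not hold in general. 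Without it, your contradiction does not close: a fully erased stabilizer found later is simply an element of $R_\cE$ that the first pass missed, and its discovery does not change $\dim R_\cE$, so nothing is contradicted. A secondary inaccuracy is describing primal peeling as row operations on $H_X$; it acts on the syndrome system $H_{Z,\cE}x_\cE=s_x$ and its only effect on the dual-peeling data is to move solved coordinates from $\cE$ into $\cK$.

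The paper's proof is a one-line monotonicity argument that sidesteps completeness entirely, and this is the observation your proposal is missing. At exhaustion of the first pass, every row that is not already fully erased has $d_\cK(i)\ge 2$ (otherwise rule~2 would still apply or the row would already have $d_\cK(i)=0$). Primal peeling only ever enlarges $\cK$, and enlarging $\cK$ can only increase each $d_\cK(i)$ since it is the weight of the row restricted to $\cK$. Rule~2 requires a row with $d_\cK(i)=1$ to fire, so it can never become applicable again, and hence cannot drive any row down to $d_\cK(i)=0$. Note that the lemma is deliberately restricted to re-running \emph{rule~2}; your span-based framing does not engage with that restriction, whereas the degree-monotonicity argument makes clear why it matters (new known columns could in principle create new weight-2 known columns and re-enable rule~1). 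Your observation that enlarging $\cK$ shrinks the set of $\cE$-supported vectors is correct and is the right germ, but it must be applied to the known-degrees of individual rows rather than to the subspace $R_\cE$.
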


\begin{proof}
Assume dual peeling is first run to exhaustion so that all rows have known-degree of at least 2.
Then, primal peeling acts on the syndrome system $H_{Z,\cE}x_\cE=s_x$ by solving some coordinates in $\cE$ and moving them to $\cK$.
When returning to dual peeling, newly marked columns as known can only increase the known-degree of a row.
Thus, the set of admissible dual-peeling moves is
unchanged by primal peeling and a second dual peeling cannot create any new row with $d_\cK(i)=0$.
\end{proof}

\subsection{Stage 2: Inactivation decoding}
After the preprocessing stage, we run the standard inactivation decoder on the (reduced) erased system.
When primal peeling stalls, we inactivate (symbolically guess) an erased variable (in our implementation, we choose a column of maximum residual weight) and continue peeling.
At termination, the remaining constraints produce a linear system in the guessed variables, which is solved by GE.

\subsection{Relation to ML decoding on surface codes}\label{sec:surface_proof}

\begin{definition}[Surface code]\label{def:sc}
Let $G=(V,E,F)$ be a square $L\times L$ lattice with boundary. Each edge $e\in E$ represents a qubit, each vertex $v\in V$ defines the $Z$-type stabilizer $Z_v$ and each face $f\in F$ defines the $X$-type stabilizer $X_f$ where
\[Z_v=\prod_{e\ni v}Z_e, \ X_f=\prod_{e\in\partial f}X_e .\]

The surface code is the stabilizer code with stabilizer group $S=\langle Z_v, X_f : v\in V, f\in F\rangle$.
\end{definition}

\begin{lemma}[Known-column weight-2 elimination identifies all fully-erased $X$-stabilizers on a surface code]
\label{lem:rule1_finds_all_fully_erased}
Consider the surface code of Definition~\ref{def:sc}.  Exhaustively applying known-column weight-2 elimination to $H_X$ guided by $\cK$ produces a reduced matrix
$\widetilde H_X$ with the following property: linear combinations of the rows that have zero support on $\cK$ form a basis for the subspace of fully erased $X$-stabilizers.
\end{lemma}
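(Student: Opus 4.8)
The plan is to exploit the structure of the surface-code $X$-stabilizer matrix, where each row $X_f$ is the boundary of a face and each known column corresponds to a known edge. Known-column weight-2 elimination merges two faces that share a known edge; geometrically, this is the operation of gluing two plaquettes across a known edge and replacing them by the symmetric difference of their boundaries. The key observation is that a linear combination $u^\T H_X$ is a fully-erased $X$-stabilizer precisely when the union of faces $\{f : u_f = 1\}$ has a boundary (in $\mathbb{F}_2$, i.e. the set of edges lying in an odd number of selected faces) that contains no known edge. So fully-erased $X$-stabilizers correspond bijectively to subsets of faces whose $\mathbb{F}_2$-boundary avoids $\cK$.

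First I would set up the correspondence between the reduced matrix $\widetilde H_X$ and a quotient/contraction of the dual complex: exhaustively applying Rule 1 along a known edge $j$ incident to faces $i_1, i_2$ identifies $u_{i_1} = u_{i_2}$ for every fully-erased combination (this is exactly the argument already given in the excerpt for Rule 1). Hence running Rule 1 to exhaustion partitions the faces into connected clusters, where two faces are in the same cluster iff they can be joined by a path of faces consecutively sharing known edges; each cluster becomes a single row of $\widetilde H_X$ whose support is the $\mathbb{F}_2$-boundary of the cluster. I would argue that after exhaustion, every known edge that survives (has weight $\geq 1$ in a row of $\widetilde H_X$) actually lies on the boundary between two distinct clusters (or on the outer boundary), since any known edge internal to a cluster was already used as a pivot and eliminated. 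Therefore a known edge has weight in $\widetilde H_X$ equal to either $0$ (fully interior, eliminated) or is a genuine inter-cluster edge.

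Next, I would show the two containments. (i) Every fully-erased $X$-stabilizer is in the row span of the zero-on-$\cK$ rows of $\widetilde H_X$: given such a stabilizer $u^\T H_X$, the Rule-1 identifications force $u$ to be constant on each cluster, so $u^\T H_X = \sum_{C}\varepsilon_C (\text{row }C \text{ of }\widetilde H_X)$ for cluster-indicator coefficients $\varepsilon_C$; since the result is fully erased, the corresponding combination of $\widetilde H_X$ rows has zero support on $\cK$, and I would note that this combination is then a sum of individual $\widetilde H_X$ rows that are each zero on $\cK$ — this uses planarity, namely that inter-cluster known edges form a structure (the adjacency graph of clusters) in which no nontrivial cluster-subset has empty boundary unless each chosen cluster is already boundary-free; this is the step that genuinely needs the surface-code geometry rather than holding for arbitrary CSS codes. (ii) Conversely, any row of $\widetilde H_X$ with zero support on $\cK$ is by definition a linear combination of original rows supported entirely on $\cE$, hence a fully-erased $X$-stabilizer. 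Linear independence of the zero-on-$\cK$ rows of $\widetilde H_X$ follows because distinct clusters have disjoint (hence independent) boundary contributions on the erased edges within their interiors, giving the basis claim.

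The main obstacle is part (i), specifically ruling out the possibility that a \emph{sum of two or more} clusters is fully erased while no single cluster is. In a general CSS code such cancellations on $\cK$ across clusters could occur, but on the planar surface-code lattice the cluster-adjacency graph (clusters as vertices, shared known edges as edges) is itself planar and connected to the boundary, so a nonempty proper subset of clusters always has a nonempty cut of known edges to its complement — I would make this precise via a connectivity/cut argument on the planar dual, possibly invoking that the surface code has no fully-erased stabilizer unless the erased set itself carries a cycle in the appropriate homology, and that Rule 1 has already contracted every such cycle into the interior of a single cluster. I would expect to spend most of the write-up carefully stating this planar cut lemma and checking the boundary (open-boundary) cases of the $L\times L$ lattice.
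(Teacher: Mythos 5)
Your setup matches the paper's: you contract faces along known edges into clusters, observe that any fully erased combination $u^\T H_X$ must have $u$ constant on each cluster (so every fully erased stabilizer is a sum of cluster rows), and correctly isolate the crux, namely ruling out a sum of several cluster rows vanishing on $\cK$ while no individual summand does. However, your resolution of that crux rests on a structural claim that is backwards. You assert that after exhaustion ``every known edge that survives \ldots lies on the boundary between two distinct clusters,'' and you then build a planar cut argument on the resulting cluster-adjacency graph. But a known edge incident to faces in two \emph{distinct} clusters has column weight two in the reduced matrix, so Rule 1 still applies to it and the procedure has not terminated; at exhaustion no known edge separates two clusters, your cluster-adjacency graph is edgeless, and the cut argument has nothing to act on. Worse, the cut lemma as you state it (``a nonempty proper subset of clusters always has a nonempty cut of known edges to its complement'') would imply that no nonempty union of clusters is ever fully erased, which is false whenever a fully erased stabilizer exists.

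The fact you actually need is much simpler and requires no planarity, homology, or cut argument --- it is what the paper's induction delivers. Every column of $H_X$ has weight at most two (an edge lies on at most two faces); under Rule 1 column weights are non-increasing and parity-preserving (a merge decreases a column's weight by two if both merged rows contain it and leaves it unchanged otherwise); and at exhaustion no known column has weight two. Hence every surviving known column has weight at most one, so distinct rows of $\widetilde H_X$ have \emph{disjoint} supports on $\cK$, and a sum of rows vanishes on $\cK$ if and only if every summand does. This closes your containment (i) immediately. Your linear-independence argument also needs repair: adjacent clusters share erased boundary edges, so their rows are not disjointly supported on $\cE$; independence instead follows because the original rows of $H_X$ are independent for the planar surface code and each merge replaces two independent rows by their sum while deleting one, which preserves independence of the remaining rows.
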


\ifarxiv \begin{proof} See Appendix~\ref{app:dualpeeling-omitted-proofs} for the proof. \end{proof} \else \begin{proof} Omitted here due to space limitations, see~\cite{pech2026stabilizer_arxiv}. \end{proof} \fi
\begin{theorem}[Known-column weight-2 elimination followed by primal peeling is ML on surface codes over erasure]
\label{thm:ml_surface_code_rule1_short}
\end{theorem}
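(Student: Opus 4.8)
The plan is to show that for the surface code, once we have applied known-column weight-2 elimination to exhaustion and then fixed one erased bit per fully erased $X$-stabilizer (moving it to $\cK$), primal peeling on the residual syndrome system $H_{Z,\cE'}x_{\cE'}=s_x$ never stalls with unresolved active variables — equivalently, the residual $Z$-check Tanner graph restricted to the reduced erasure set has no stopping set. Since peeling with no inactivations solves the system exactly (it is a special case of inactivation decoding, and the excerpt notes peeling-only is ML when it converges), and since by Lemma~\ref{lem:rule1_finds_all_fully_erased} we have removed exactly a basis worth of degrees of freedom from fully erased $X$-stabilizers without changing the logical state, convergence of primal peeling on the reduced system yields an ML decoder. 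By CSS symmetry the same argument with $H_X\leftrightarrow H_Z$ handles the dual error type.

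The heart of the argument is the combinatorial claim about surface-code stopping sets. I would work on the lattice $G=(V,E,F)$ directly: $Z$-checks are vertices, qubits are edges, so a residual erasure pattern is a subset $\cF\subseteq E$ of edges, and it is a stopping set for primal peeling iff every vertex is incident to either zero or at least two edges of $\cF$ — i.e., $\cF$ is an edge subset in which every vertex has degree $0$ or $\ge 2$. Such a subgraph always contains a cycle (a subgraph with min-degree $\ge 2$ on its nonisolated vertices contains a cycle). On the surface code with boundary, cycles of the edge graph come in two flavors: contractible cycles, which are sums of face boundaries $\partial f$ and hence supports of $X$-type stabilizers; and homologically nontrivial chains running between the two rough boundaries, which are supports of logical $X$ operators. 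The key point is that either flavor, when it lies entirely in the residual erasure set, would have had to be detected earlier. A contractible cycle supported on the erasures is (a linear combination whose support is) a fully erased $X$-stabilizer, so by Lemma~\ref{lem:rule1_finds_all_fully_erased} the weight-2 elimination stage would have flagged it and we would have fixed one of its bits, breaking that cycle; thus after preprocessing no contractible cycle survives in $\cE'$. The remaining case — a logical $X$ chain lying entirely in the erasure set — is exactly the event of an uncorrectable erasure pattern, for which no decoder can succeed; under the standard convention this is declared a (logical) failure, and it is precisely the event on which the ML decoder also fails, so the decoders agree.

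Assembling these: after preprocessing, any residual stopping set $\cF$ contains a cycle; that cycle is not contractible (those are gone); hence it is homologically nontrivial; hence $\cE$ contained the support of a logical operator, an event on which ML itself fails. Conversely, whenever ML succeeds, $\cE$ contains no logical-operator support, so no nontrivial cycle lies in $\cE'$, so every cycle in $\cE'$ is contractible and was removed — leaving $\cE'$ acyclic, i.e., a forest, which has no stopping set, so primal peeling runs to completion and outputs the (unique up to the removed stabilizer freedom, hence logically unique) solution. Therefore the proposed two-stage decoder has the same logical failure set as ML, i.e., it is ML.

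The main obstacle I anticipate is making the "cycle $\Rightarrow$ either a fully erased $X$-stabilizer or a logical" dichotomy fully rigorous on the lattice \emph{with boundary}: one must be careful about which boundary edges and boundary vertices are present, handle the half-faces/degree-1 boundary vertices correctly so that the min-degree-$\ge 2$ argument still forces a cycle, and invoke the right homological statement ($H_1$ of the surface-with-boundary relative to the rough boundary) to classify cycles. A secondary point to nail down is that Lemma~\ref{lem:rule1_finds_all_fully_erased} together with the "fix one bit, move it to $\cK$" step really does kill \emph{every} contractible cycle in $\cE'$ and not merely a spanning set of them — but this follows because fixing one edge of each basis stabilizer, i.e. deleting those edges from the residual erasure set, destroys every element of the span, and every contractible erased cycle lies in that span.
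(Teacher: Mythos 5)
Your proposal is correct and follows essentially the same route as the paper's proof: identify the null space of $H_{Z,\cE}$ with cycles in the erased subgraph, decompose these into fully erased $X$-stabilizers (found via Lemma~\ref{lem:rule1_finds_all_fully_erased} and broken by bit-fixing) and erased $X$-logical operators (on which ML also fails), and conclude that the residual erasure pattern is acyclic and hence peelable. Your extra detail on stopping sets containing cycles, and your caveat that the bits must be fixed sequentially (fixing one bit per element of a fixed basis all at once can leave a surviving linear combination) merely sharpen steps the paper states more tersely.
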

\begin{proof}
Let us focus on the linear system $H_{Z,\cE}x_{\cE}=s_x$.
For a surface code, $H_Z$ is a (boundary-modified) vertex--edge incidence matrix, hence the null space of $H_{Z,\cE}$ contains all cycles in the erased subgraph.
In other words, $H_{Z,\cE} x_{\cE}' = 0$ if and only if the erased edges associated with $\mathrm{supp}(x_{\cE}')$ form a disjoint union of cycles.  
These cycles decompose into cycles generated by (i) fully-erased $X$-stabilizers (i.e., face boundaries), and (ii) erased $X$-logical operators.

By Lemma~\ref{lem:rule1_finds_all_fully_erased}, exhaustive known-column weight-2 elimination
reveals all fully erased $X$-stabilizers supported on $\cE$.  The first type of cycles (i.e., fully erased $X$-stabilizers) are removed by fixing one erased qubit in the support of each independent stabilizer.
Thus, any residual nonzero portion of $x_{\cE}'$ must be an erased $X$-logical operator.
If there is no erased $X$-logical operator, then the linear system (after fixing bits in erased stabilizers) has a unique solution.
Thus, the residual erased subgraph is acyclic and hence peelable. So, primal peeling uniquely determines all entries of $x_{\cE}$ from $H_{Z,\cE}x_{\cE}=s_x$.
Conversely, if an erased $X$-logical operator exists, then ML decoding must fail.
Hence the algorithm succeeds on the same set of erasures as ML decoding.
\end{proof}

\section{Numerical Results}
\label{sec:results}

We now present numerical results for the stabilizer-assisted inactivation decoder over a range of code families on the erasure channel. We consider four decoders:
\begin{itemize}
  \item primal peeling decoder;
  \item dual peeling followed by primal peeling;
  \item peeling with hard guessing when stuck;
  \item inactivation decoding;
  \item inactivation decoding with stabilizer assistance.
\end{itemize}

\begin{figure}[t]
    \hspace*{-7mm}
    \scalebox{0.60}{
    \input{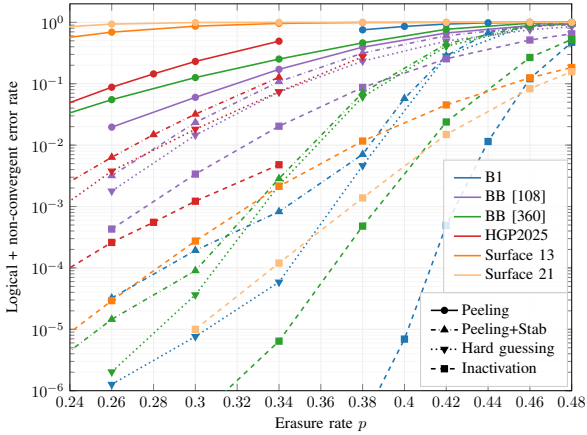}
    }
    \vspace{-3mm}
    \caption{Decoder comparison across code families on the quantum erasure channel: logical failure rate versus erasure rate $p$ for peeling, inactivation, and the two stabilizer-assisted variants.}
    \label{fig:errorrate}
    \vspace{-4mm}
\end{figure}

\subsection{Logical Error Rates and ML Optimality}

Figure~\ref{fig:errorrate} plots the logical failure probability as a function of the erasure probability $p$ for several code families, comparing the ML inactivation decoder, the primal peeling decoder, and the combined dual-primal peeling decoder.

For readability, we omit the curve for “dual peeling followed by inactivation,” since it is maximum-likelihood and therefore coincides with (simple) inactivation decoding across all code families. In addition, Section~\ref{sec:surface_proof} shows that dual-primal peeling is ML for surface codes, so its curve is likewise omitted.

\subsection{Effect of Dual Peeling}

In Figure~\ref{fig:errorrate}, dual peeling yields a substantial improvement over primal peeling across all code families. We also observe that the combined dual-primal peeling decoder, while not matching ML performance, requires only $O(n)$ operations, in contrast to inactivation decoding, which ultimately uses Gaussian elimination on the matrix $C$ with complexity $O(|\cI|^3)$.

Dual peeling also reduces the size of the matrix that must be inverted. This reduction is evident in the low-erasure regime, but remains significant even at higher erasure rates
%(e.g., for the B1 code at $p=0.48$, the invertible submatrix is reduced by more than 20\%),
though the effect is weaker for other families of codes.

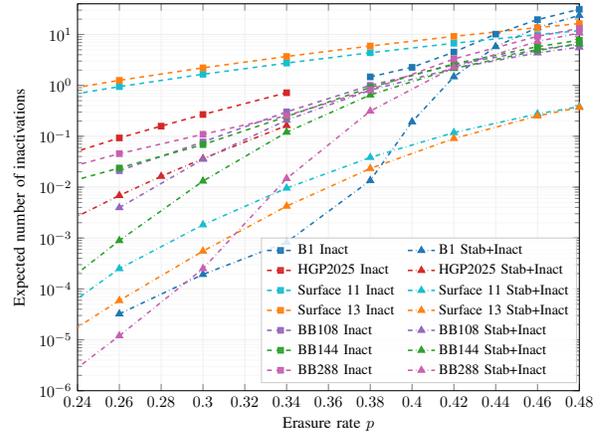
\begin{figure}[t]
    \hspace{0mm}
    \scalebox{0.60}{
    % Auto-generated from coords2.json
% Requires: \usepackage{pgfplots} \pgfplotsset{compat=1.18}
\begin{tikzpicture}
\definecolor{tabblue}{RGB}{31,119,180}
\definecolor{tabred}{RGB}{214,39,40}
\definecolor{taborange}{RGB}{255,127,14}
\definecolor{tabcyan}{RGB}{23,190,207}
\definecolor{tabgreen}{RGB}{44,160,44}
\definecolor{tabpurple}{RGB}{148,103,189}
\definecolor{tabbrown}{RGB}{200,96,175}
\definecolor{lightgray204}{RGB}{204,204,204}
\definecolor{lightgray}{RGB}{240,240,240}

\pgfplotsset{
  codeBOne/.style={draw=tabblue},
  codeHGP/.style={draw=tabred},
  codeSurfEleven/.style={draw=tabcyan},
  codeSurfThirteen/.style={draw=taborange},
  codeBBOneOhEight/.style={draw=tabpurple},
  codeBBOneFourFour/.style={draw=tabgreen},
  codeBBTwoEightEight/.style={draw=tabbrown},
  codeDefault/.style={draw=black},
  decInact/.style={dashed, line width=1pt, mark=square*, mark size=1.6pt},
  decStabInact/.style={dashdotted, line width=1pt, mark=triangle*, mark size=2.2pt}
}

\begin{axis}[
  label style={font=\normalsize},
  xlabel={Erasure rate $p$},
  ylabel={Expected number of inactivations},
  width=5in,
  height=4in,
  ymode=log,
  unbounded coords=jump,
  grid=both,
  major grid style={draw=lightgray, opacity=1},
  minor grid style={draw=lightgray, opacity=0.5},
  ticklabel style={font=\normalsize},
  label style={font=\normalsize},
  title style={font=\large},
  xmin=0.24, xmax=0.48,
  ymin=1e-06, ymax=4e1,
  mark layer=above,
  legend columns=2,
  legend cell align={left},
  legend style={
    at={(0.995,0.02)},
    anchor=south east,
    draw=lightgray204,
    fill opacity=0.85,
    text opacity=1,
    font=\small,
    row sep=1pt,
    /tikz/every even column/.append style={column sep=6pt}
  },
  legend image post style={line width=1pt},
  scale=1.0
]

% B1 Inact
\addplot+[
  codeBOne,
  decInact,
  mark options={solid, draw=tabblue, fill=tabblue}
] table [x=x, y=y, col sep=comma] {
x,y
0.38,1.4685975
0.4,2.260899807692307
0.42,4.50768
0.44,10.151836
0.46,19.564184
0.48,31.316024
};
\addlegendentry{B1 Inact}

% B1 Stab+Inact
\addplot+[
  codeBOne,
  decStabInact,
  mark options={solid, draw=tabblue, fill=tabblue}
] table [x=x, y=y, col sep=comma] {
x,y
0.26,3.233333333333333e-05
0.3,0.0001926666666666667
0.34,0.0008266666666666666
0.38,0.0134884375
0.4,0.1922228846153846
0.42,1.474488
0.44,5.77898
0.46,13.703936
0.48,23.717984
};
\addlegendentry{B1 Stab+Inact}

% HGP2025 Inact
\addplot+[
  codeHGP,
  decInact,
  mark options={solid, draw=tabred, fill=tabred}
] table [x=x, y=y, col sep=comma] {
x,y
0.14,0.001243777777777778
0.18,0.0069475
0.22,0.028093
0.26,0.0927625
0.28,0.1581758064516129
0.3,0.2677225
0.34,0.7169
};
\addlegendentry{HGP2025 Inact}

% HGP2025 Stab+Inact
\addplot+[
  codeHGP,
  decStabInact,
  mark options={solid, draw=tabred, fill=tabred}
] table [x=x, y=y, col sep=comma] {
x,y
0.14,1.166666666666667e-05
0.18,0.0001225
0.22,0.001053
0.26,0.006845
0.28,0.01629354838709677
0.3,0.0363775
0.34,0.1623775
};
\addlegendentry{HGP2025 Stab+Inact}

% Surface 11 Inact
\addplot+[
  codeSurfEleven,
  decInact,
  mark options={solid, draw=tabcyan, fill=tabcyan}
] table [x=x, y=y, col sep=comma] {
x,y
0.18,0.2412466666666667
0.22,0.503355
0.26,0.94452
0.3,1.65291
0.34,2.740453333333333
0.38,4.376233333333333
0.42,6.722333333333333
0.46,9.921329999999999
0.48,11.87007
};
\addlegendentry{Surface 11 Inact}

% Surface 11 Stab+Inact
\addplot+[
  codeSurfEleven,
  decStabInact,
  mark options={solid, draw=tabcyan, fill=tabcyan}
] table [x=x, y=y, col sep=comma] {
x,y
0.18,1.666666666666667e-06
0.22,1.666666666666667e-05
0.26,0.00025
0.3,0.00182
0.34,0.009573333333333333
0.38,0.03838
0.42,0.1181066666666667
0.46,0.27526
0.48,0.38198
};
\addlegendentry{Surface 11 Stab+Inact}

% Surface 13 Inact
\addplot+[
  codeSurfThirteen,
  decInact,
  mark options={solid, draw=taborange, fill=taborange}
] table [x=x, y=y, col sep=comma] {
x,y
0.18,0.3161582222222222
0.22,0.6638848888888889
0.26,1.256050222222222
0.3,2.213034
0.34,3.698305333333333
0.38,5.934980666666666
0.42,9.176641333333333
0.46,13.64338866666667
0.48,16.37886933333333
};
\addlegendentry{Surface 13 Inact}

% Surface 13 Stab+Inact
\addplot+[
  codeSurfThirteen,
  decStabInact,
  mark options={solid, draw=taborange, fill=taborange}
] table [x=x, y=y, col sep=comma] {
x,y
0.18,0
0.22,5.333333333333334e-06
0.26,5.911111111111111e-05
0.3,0.0005533333333333333
0.34,0.004229333333333333
0.38,0.02324933333333333
0.42,0.090374
0.46,0.2505166666666667
0.48,0.3681333333333333
};
\addlegendentry{Surface 13 Stab+Inact}

% BB108 Inact
\addplot+[
  codeBBOneOhEight,
  decInact,
  mark options={solid, draw=tabpurple, fill=tabpurple}
] table [x=x, y=y, col sep=comma] {
x,y
0.26,0.02104933333333333
0.3,0.07742733333333333
0.34,0.3016293333333334
0.38,0.998884
0.42,2.537714
0.46,5.034072
0.48,6.582673333333333
};
\addlegendentry{BB108 Inact}

% BB108 Stab+Inact
\addplot+[
  codeBBOneOhEight,
  decStabInact,
  mark options={solid, draw=tabpurple, fill=tabpurple}
] table [x=x, y=y, col sep=comma] {
x,y
0.26,0.003959333333333333
0.3,0.03534066666666667
0.34,0.2072486666666667
0.38,0.8027153333333333
0.42,2.161122666666667
0.46,4.344790666666666
0.48,5.643398
};
\addlegendentry{BB108 Stab+Inact}

% BB144 Inact
\addplot+[
  codeBBOneFourFour,
  decInact,
  mark options={solid, draw=tabgreen, fill=tabgreen}
] table [x=x, y=y, col sep=comma] {
x,y
0.22,0.008354666666666666
0.26,0.02380333333333333
0.3,0.06858400000000001
0.34,0.2437
0.38,0.9082773333333334
0.42,2.650231333333333
0.46,5.753414666666667
0.48,7.732769333333334
};
\addlegendentry{BB144 Inact}

% BB144 Stab+Inact
\addplot+[
  codeBBOneFourFour,
  decStabInact,
  mark options={solid, draw=tabgreen, fill=tabgreen}
] table [x=x, y=y, col sep=comma] {
x,y
0.22,4.666666666666666e-05
0.26,0.000892
0.3,0.01315066666666667
0.34,0.1201286666666667
0.38,0.64861
0.42,2.148179333333333
0.46,4.846968666666666
0.48,6.518218666666667
};
\addlegendentry{BB144 Stab+Inact}

% BB288 Inact
\addplot+[
  codeBBTwoEightEight,
  decInact,
  mark options={solid, draw=tabbrown, fill=tabbrown}
] table [x=x, y=y, col sep=comma] {
x,y
0.22,0.01641733333333333
0.26,0.045464
0.3,0.1095086666666667
0.34,0.2535013333333334
0.38,0.8179346666666667
0.42,3.305116666666667
0.46,9.189726666666667
0.48,13.15619933333333
};
\addlegendentry{BB288 Inact}

% BB288 Stab+Inact
\addplot+[
  codeBBTwoEightEight,
  decStabInact,
  mark options={solid, draw=tabbrown, fill=tabbrown}
] table [x=x, y=y, col sep=comma] {
x,y
0.22,6.666666666666667e-07
0.26,1.2e-05
0.3,0.00025
0.34,0.01476866666666667
0.38,0.3120786666666667
0.42,2.269012
0.46,7.307929333333333
0.48,10.68673333333333
};
\addlegendentry{BB288 Stab+Inact}

\end{axis}
\end{tikzpicture}
    }
    \vspace{-3mm}
    \caption{Comparison of the expected size of the inverted matrix for inactivation decoding, with and without stabilizer peeling, versus erasure rate $p$.}
    \label{fig:histogram}
    \vspace{-0.5mm}
\end{figure}

\vspace{-0mm}
\section{Conclusion and Future Work}
\label{sec:conclusion}

We presented a stabilizer-assisted inactivation decoder for CSS/QLDPC codes on the quantum erasure channel that combines (i) a dual peeling preprocessing step on the stabilizer matrix to find fully erased stabilizers and (ii) standard primal peeling + inactivation on the induced linear system. The key insight is that stabilizers supported entirely on the erasure set correspond to degeneracies that can be fixed “for free,” so identifying them before inactivation reduces the size of the inverted matrix, therefore reducing the decoding complexity.

Across the code families considered, dual peeling improves on low-complexity peeling-based decoding and inactivation decoding by reducing the dimension of the residual system that must be solved. In addition, for surface codes the combined dual-primal peeling procedure results in ML performance on the erasure channel.
Several directions for future work remain:
\begin{itemize}
  \item extend stabilizer-assisted decoding beyond the BEC to bit-flip channels, possibly through BP-based algorithms;
  \item compare with more complicated non-erasure decoders such as BPGD on the erasure channel.
\end{itemize}

\vspace{-0mm}
\section*{Acknowledgements}

We also acknowledge the use of generative AI tools for tasks related to programming, editing, and proofreading.

\newpage

\bibliographystyle{IEEEtran}
\bibliography{refs}

\ifarxiv
\appendices

\section{Notation}
\label{app:notation}

We note here some notational conventions we use but have not explicitly defined.

\begin{itemize}
  \item \emph{Subvectors and submatrices.}
  For an index set $\cS \subseteq \{1,\dots,n\}$ and a vector $v\in\mathbb{F}_2^n$, the notation
  $v_{\cS}$ denotes the restriction of $v$ to coordinates in $\cS$.
  For a matrix $M$ with $n$ columns, $M_{:,\cS}$ (abbreviated $M_{\cS}$)
  denotes the submatrix formed by columns in $\cS$; similarly $M_{\cR,:}$ restricts rows.
  Complements are denoted $\cS^c$.

  \item \emph{Weight.}
  For a vector $v$, we denote by $\wt(v)$ its Hamming weight (i.e., the number non-zero entries).

  \item \emph{Inactivation block form.}
  When erased columns are permuted into \emph{active} and \emph{inactivated} groups,
  the erased submatrix may be written in block form
  \[
    H_{\cE} \ \equiv\ 
    \begin{bmatrix}
      A & D\\
      B & C
    \end{bmatrix},
  \]
  the right block-column corresponds to inactivated (guessed) variables.
  After the peeling and guessing stage of inactivation decoding finishes but before Gaussian elimination, block $A$ will be the zero matrix and block $B$ will be a permutation of the identity matrix.
\end{itemize}

\section{Omitted Proofs}
%Proofs and Remarks on Dual Peeling}
\label{app:dualpeeling-omitted-proofs}

%------------------------------------------------
%\subsection{Lemma~\ref{lem:no_second_dual_pass}: There is no benefit from a second row peeling on known rows after primal peeling.}
%------------------------------------------------

%\begin{proof}
%Primal peeling acts on the syndrome system $H_{Z,\cE}x_\cE=s_x$ by solving some coordinates in $\cE$ and moving them to $\cK$. This operation can only %make rows increase in weight. As the remainder of rows are at least weight-2 after dual peeling, the set of admissible dual-peeling moves isunchanged by primal peeling, so a second dual pass cannot create any new row with $d_\cK(i)=0$ beyond those already present after the first exhausted dual pass.
%\end{proof}

%------------------------------------------------
\subsection{ Lemma~\ref{lem:rule1_finds_all_fully_erased}: Known-column weight-2 elimination identifies all fully-erased $X$-stabilizers on a surface code}
%------------------------------------------------

\label{app:proof_lem_rule1_finds_all_fully_erased}

\begin{proof}
Let $\Phi$ be a partition of $F$ into disjoint contiguous subsets.  For each part $C\in\Phi$ define the $X$-stabilizer $X_C:=\prod_{f\in C}X_f$ and let $h_C\in\mathbb{F}_2^{|E|}$ be the $\mathbb{F}_2$-sum of
$\{\partial f:f\in C\}$.

\emph{Inductive hypothesis:} In the partition $\Phi_t$, every known interior edge $j\in\cK$ is either
(i) a boundary edge between two distinct parts of $\Phi_t$, or (ii) internal to a part of $\Phi_t$ and hence canceled
from the corresponding $h_C$.  

Initialize $\Phi_0=\{\{f\}:f\in F\}$. Hypothesis holds for $\Phi_0$ because every edge is at the boundary of two subsets of faces in our partition and none of the subsets have any interior edges. We proceed by induction over the number of elimination merges. Assume the hypothesis for $\Phi_t$. If $j\in\cK$ is a boundary edge between parts $C_1,C_2\in\Phi_t$, applying known-column weight-2 elimination replaces the two rows by their $\mathbb{F}_2$-sum.
In other word, it replaces $X_{C_1},X_{C_2}$ by
$X_{C_1}X_{C_2}=X_{(C_1\cup C_2) \setminus \{j\}}$ because $j$ becomes internal to $C_1\cup C_2$ and cancels out, while all other known
interior edges keep property (i) or (ii).  Hence the hypothesis holds for $\Phi_{t+1}$.

After elimination, no known interior edge separates two parts.
Therefore, each row $h_C$ has zero support on $\cK$ exactly when $X_C$ is fully erased.  Conversely, any fully erased $X$-stabilizer
$u^\T H_X$ is the $\mathbb{F}_2$-boundary of a union of faces with no known-edges on its boundary and $u^\T H_X$ lies in the span of the final set of $h_C$'s.
\end{proof}

\fi % end ifarxiv appendix

\section{Complexity Analysis}

\subsection{Dual peeling complexity with Rule-2}

Here, we consider the complexity of dual peeling using only Rule-2.
Recall the the columns of $H_X$ are partitioned into the known set $\cK$ and erased set $\cE$, and write each row as
$H_{X,i} = [\,H_{X,i,\cK}\;\; H_{X,i,\cE}\,]$ after reordering columns.
In the dual peeling algorithm restricted to weight-$1$ moves (Rule-2), the control logic depends only on the
\emph{known} submatrix $H_{X,\cK}$ through the known-degree
\[
d_{\cK}(i) \triangleq \wt\!\big(H_{X,i,cK}\big).
\]
At each step, one selects a row $i$ with $d_{\cK}(i)=1$ and unique known neighbor $j\in \cK$, and performs row
eliminations $H_{X,k}\leftarrow H_{X,k}\oplus H_{X,i}$ for all other rows $k$ incident to $j$.
Implementing this phase using standard adjacency lists for the sparse bipartite incidence graph induced by $H_{X,\cK}$
yields time essentially linear in the number of non-zero elements in $H_{X,\cK}$:
each known-side edge is removed at most once, and degree updates can be charged to a constant number of
adjacency operations.  Upon termination, the number of \emph{fully erased} stabilizer generators is simply
\[
s \triangleq \big|\{\, i : d_{\cK}(i)=0 \,\}\big|,
\]
which can therefore be computed without inspecting the erased portion $H_{X,\cE}$.

\paragraph{Reducing complexity by deferring work}
A naive implementation applies each elimination $H_{X,k}\leftarrow H_{X,k}\oplus H_{X,i}$ to the full row
$[H_{X,k,\cK}\;H_{X,k,\cE}]$, which can induce substantial fill-in in $H_{X,\cE}$ even though the known-side graph
remains sparse.
To control this cost, we treat the known-side peeling as applying elementary row operations to $H_{X}$ but performing them only on $H_{X,\cK}$ and track for future application to $H_{X,\cE}$.
Let $T$ denote the transform implied by the tracked operations.
Correctness is immediate: applying the same transform $T$ to the full matrix
$[H_{X,\cK}\;H_{X,\cE}]$ either eagerly or lazily produces identical final rows (i.e., the same
set of fully erased generators).  From a complexity perspective, this decouples the overall cost into (i) the peeling phase,
which is linear in the number of edges ones in $H_{X,\cK}$, and (ii) a \emph{replay} phase in which one evaluates the action of $T$ as needed.  In particular, the replay cost depends on how many columns $H_{X,\cE}$ we choose to compute and this could be much smaller than $|\cE|$.

\paragraph{Random-column replay}
Rather than replaying $T$ on all erased columns in $\cE$, one may restrict attention to a random subset
$R \subseteq \cE$ whose size scales with the number of fully erased generators.
Concretely, after peeling on $H_{X,\cK}$ yields the fully erased row set $S=\{i:d_{\cK}(i)=0\}$ with $|S|=s$, choose a random subset $R\subseteq \cE$ (say with with $|R|\approx 2s$), and replay the logged row operations only on the implied column submatrix of $H_{X,\cE}$.
The goal is not to reconstruct all erased supports, but only to identify a full set of bits that can be fixed to 0 via stabilizer pruning.
The row operations necessary for pruning would also be applied only to this submatrix.

The basic idea actually allows Rule-1 dual peeling as well but, in that case, we cannot argue that the first stage complexity is linear in the number of non-zero elements in $H_{X,\cK}$.
Instead, we believe additional log factors may be required and we leave this analysis as future work.

\subsection{Faster Implementation for Surface Code}

To minimize complexity, one can implement the Rule-1 dual peeling step on a surface code by working directly with the dual lattice rather than performing explicit sparse row additions on $H_X$.
Recall that each qubit corresponds to an edge $e\in E$ and each $X$-stabilizer corresponds to a face $f\in F$.
Moreover, interior edges are incident to exactly two faces.
For every \emph{known} qubit $e\in \cK$ that is incident to two faces $f_1,f_2\in F$, the weight-2 elimination move that would cancel the column of $e$ in $H_{X,\cK}$ can be realized as contracting the dual edge $(f_1,f_2)$.  We implement these contractions using the disjoint-set union data structure over the face set $F$, where each contractible known edge triggers a union operation.

This implementation avoids intermediate fill-in that can occur if one instead uses row-xors on sparse row supports.  With union-by-rank (or union-by-size) and path compression, a sequence of $m$ union/find operations on $|F|$ elements runs in $O(m\,\alpha(|F|))$ time, where $\alpha(\cdot)$ is the inverse Ackermann function~\cite{Tarjan1975UnionFind,CLRS2009}.  Since a surface code has $|F|=\Theta(n)$ faces and at most $\Theta(n)$ interior known edges eligible for contraction, the full elimination pass runs in $O(n\,\alpha(n))$ time and $O(n)$ space.  In practice, $\alpha(n)$ grows so slowly that it it is treated as constant less than 5 for all feasible problem sizes.
Thus, the preprocessing is effectively linear-time with small overhead.

After contracting all eligible known dual edges, each union--find component represents a reduced $X$-check obtained by xoring the faces in that component, exactly mirroring exhaustive known-column weight-2 elimination.  To identify fully-erased $X$-stabilizers, we perform a single additional scan over edges to mark which components retain any support on $K$ (e.g., via boundary edges incident to a single face or other non-contractible known support); components with zero support on $K$ correspond to fully-erased $X$-stabilizers.  We also record, for each such component, a representative erased qubit $e\in E$ on which we can impose an arbitrary gauge choice (e.g., fix $x_e=0$) to remove the associated cycle degeneracy prior to primal peeling.

Finally, we run standard queue-based primal peeling on the syndrome system $H_{Z,E}x_E=s_x$.  For surface codes, $H_Z$ is a (boundary-modified) vertex--edge incidence matrix, so each peeling step eliminates an erased edge when a vertex constraint has residual erased-degree one, updating at most a constant number of neighboring degrees because vertex degrees are bounded.  Hence primal peeling runs in $O(|E|)$ time and $O(n)$ space.  Overall, the ML erasure decoder for surface codes given by Rule-1 dual peeling plus primal peeling has time complexity $O(n\,\alpha(n)+|E|)=O(n\,\alpha(n))$ and memory $O(n)$.

\end{document}